\renewcommand{\arraystretch}{1.2}
\newtheorem{theorem}{Theorem}
\newtheorem{lemma}{Lemma}
\newtheorem{definition}{Definition}
\newcommand{\RNum}[1]{\uppercase\expandafter{\romannumeral #1\relax}}
\newcommand{\gf}{{\mathbb{F}}}
\newcommand{\ls}[1]
    {\dimen0=\fontdimen6\the\font\lineskip=#1\dimen0
     \advance\lineskip.5\fontdimen5\the\font
     \advance\lineskip-\dimen0
     \lineskiplimit=0.9\lineskip
     \baselineskip=\lineskip
     \advance\baselineskip\dimen0
     \normallineskip\lineskip\normallineskiplimit\lineskiplimit
     \normalbaselineskip\baselineskip
     \ignorespaces}
\begin{document}

\bibliographystyle{abbrv}

\title{On the second-order zero differential spectra of some power functions over finite fields}

\author{Yuying Man\footnotemark[1]\thanks{Y. Man and X. Zeng are with Hubei Key Laboratory of Applied Mathematics, Faculty of Mathematics and Statistics, Hubei University, Wuhan 430062, China. Email:yuying.man@aliyun, xzeng@hubu.edu.cn},
Nian Li\footnotemark[2]\thanks{N. Li and Z. Xiang are with Hubei Key Laboratory of Applied Mathematics, School of Cyber Science and Technology, Hubei University, Wuhan 430062, China. Email: nian.li@hubu.edu.cn, xiangzejun@hubu.edu.cn},
 Zejun Xiang\footnotemark[2], Xiangyong Zeng\footnotemark[1]
}
\date{\today}
\maketitle

\thispagestyle{plain} \setcounter{page}{1}

\begin{abstract}
Boukerrou et al. (IACR Trans. Symmetric Cryptol. 2020(1), 331-362) introduced the notion of Feistel Boomerang Connectivity Table (FBCT), the Feistel counterpart of the Boomerang Connectivity Table (BCT), and the Feistel boomerang uniformity (which is the same as the second-order zero differential uniformity in even characteristic). FBCT is a crucial table for the analysis of the resistance of block ciphers to power attacks such as differential and boomerang attacks. It is worth noting that the coefficients of FBCT are related to the second-order zero differential spectra of functions. In this paper, by carrying out certain finer manipulations of solving specific equations over the finite field $\mathbb{F}_{p^n}$, we explicitly determine the second-order zero differential spectra of some power functions with low differential uniformity, and show that our considered functions also have low second-order zero differential uniformity. Our study pushes further former investigations on second-order zero differential uniformity and Feistel boomerang differential uniformity for a power function $F$.

\noindent{\bf Keywords} Feistel Boomerang Connectivity Table, Feistel boomerang
differential uniformity, Second-order zero differential spectra, Second-order zero differential uniformity

\noindent{\bf MSC (2020)} 94A60, 11T06

\end{abstract}

\section{Introduction}

Boomerang attack, introduced by Wagner \cite{BA} in 1999, is a crucial cryptanalytical technique on block cyphers. To analyze the boomerang attack of block cyphers in a better way, analogous to the Difference Distribution Table (DDT) concerning the differential attack, in Eurocrypt 2018, Cid et al. in \cite{BCT} introduced a new tool known as BCT to measure the resistance of an S-box against boomerang attacks.
To consider the case of ciphers following a Feistel Network structure,
Boukerrou et al. \cite{def-FBCT} introduced the notion of FBCT, as an extension for Feistel cipher, where the employed S-boxes may not be permutations.
They also studied the properties of the FBCT of $F$ over the finite field of even characteristic, and showed that $F$ is an almost perfect nonlinear (APN) function (which have the lowest differential uniformity over even characteristic finite fields) if and only if FBCT of $F$ is $0$ for $a,b\in \mathbb{F}_{2^n}$ with $ab(a+b)\neq 0$. Furthermore, Garg et al. \cite{GHRS-SOZ} showed that, for odd characteristic, if $F$ is second-order zero differentially 1-uniform then it has to be an APN function.

In \cite{CC-SOZ}, the authors studied the second-order zero differential spectra of the inverse function and some APN functions in odd characteristic and they also showed that these functions have low second-order zero differential uniformity. Eddahmani et al. \cite{ex-FBCT} investigated the FBCT of the inverse, the Gold and the Bracken-Leander functions over $\mathbb{F}_{2^n}$. They further determined the Feistel boomerang differential uniformity of these functions. The authors in \cite{FBCT-MML} provided explicit values of all entries in the FBCT of a specific power function and they determined the Feistel boomerang differential uniformity of this function. Recently, Garg et al.  computed the second-order zero differential spectra of several APN and other low differential uniform functions in \cite{GHRS-SOZ} and \cite{arxiv-SOZ}. They also given the second-order zero differential uniformity of these functions.
Table 1 gives the known power functions with the second-order zero differential uniformity over finite fields. In this paper, in order to in-depth analysis of $F$ over finite fields concerning their cryptographic properties, we studied the second-order zero differential spectra of some power functions with low differential uniformity in finite fields. In addition, these mappings considered in this paper also have low second-order zero differential uniformity.

\begin{table}
\caption{Power functions $F(x)=x^d$ over $\gf_{p^n}$ with known second-order differential uniformity}\label{table-1}
\renewcommand\arraystretch{1.2}
\setlength\tabcolsep{14pt}
\centering
\begin{tabular}{ccccc}
		\toprule
     $p$ & $d$ & Condition & $\nabla_F$ &  Ref. \\
       \midrule
       $p=2$ &  $2^n-2$ & $n$ odd or $n$ even & 2 or 4  &  \cite{ex-FBCT} \\
       $p=2$ & $2^k+1$ & ${\rm gcd}(n,k)=d$ & $2^d$ & \cite{ex-FBCT}\\
       $p=2$ &$2^{2k}+2^k+1$ & $n=4k$ & $2^{2k}$ & \cite{ex-FBCT}\\
       $p=2$ & $2^{m+1}-1$ & $n=2m+1$ or $n=2m$ & 2 or $2^m$ & \cite{FBCT-MML} \\
       $p=2$ &  $2^m-1$ & $n=2m+1$ or $n=2m$ & $2^m-4$ & \cite{arxiv-SOZ} \\
       $p=2$ & $21$ & $n$ odd or $n$ even & 4 or 16 & \cite{GHRS-SOZ} \\
       $p=2$ & $2^n-2^s$ & ${\rm gcd}(n, s+1)=1$, $n-s=3$ & 4 & \cite{GHRS-SOZ} \\
      $p>3$ &  $3$ & any & $1$ & \cite{CC-SOZ} \\
      $p=3$ & $3^n-3$ & $n>1$ is odd & $2$ & \cite{CC-SOZ} \\
      $p>3$ & $p^n-2$ & $p^n\equiv 2\,\, ({\rm mod}\,\,3)$ & $1$ & \cite{CC-SOZ} \\
      $p>3$ & $p^n-2$ & $p^n\equiv 1\,\, ({\rm mod}\,\,3)$ & $3$ & \cite{CC-SOZ} \\
      $p=3$ & $3^n-2$ & any & $3$ & \cite{CC-SOZ} \\
      $p>3$ & $p^m+2$ & $n=2m$, $p^m\equiv 1\,\, ({\rm mod}\,\,3)$ & $1$ & \cite{CC-SOZ} \\
      $p>3$ & $4$ & $n>1$ & $2$ & \cite{arxiv-SOZ} \\
      $p$ & $\frac{2p^n-1}{3}$ & $p^n\equiv 2\,\, ({\rm mod}\,\,3)$ & $1$ & \cite{arxiv-SOZ} \\
      $p>3$ & $\frac{p^k+1}{2}$ & ${\rm gcd}(2n, k)=1$ & $\frac{p-3}{2}$ & \cite{arxiv-SOZ}\\
      $p=3$ & $\frac{3^n-1}{2}+2$ & $n$ odd & $3$ & \cite{arxiv-SOZ} \\
      $p=3$ &  $2\cdot 3^{\frac{n-1}{2}}+1$ & any & $3$ & \cite{GHRS-SOZ} \\
      $p$ & $\frac{p^n+1}{4}+\frac{p^n-1}{2}$ & $p^n\equiv 3\,\, ({\rm mod}\,\,8)$ & $8$ or $18$ & \cite{GHRS-SOZ} \\
      $p$ & $\frac{p^n+1}{4}$ & $p^n\equiv 7\,\, ({\rm mod}\,\,8)$ & $8$ or $18$ & \cite{GHRS-SOZ} \\
      $p=2$ & $7$ & any & $4$ & This paper \\
      $p=2$ & $2^{m+1}+3$ & $n=2m+1$ or $n=2m$ & $4$ or $2^m$ & This paper \\
      $p>2$ &  $5$ & any & $3$ & This paper \\
      $p=3$ &  $7$ & any & $3$ & This paper \\
\bottomrule
\end{tabular}
\end{table}

The remainder of this paper is organized as follows. In Section \ref{pre}, we present some basic notations and a few known helpful results in the technical part of the paper. In Section \ref{even-result1}, we consider the second-order zero differential spectra of two classes of power functions in even characteristic. Section \ref{odd-result1} studies the second-order zero differential spectra of two classes of power functions in odd characteristic. Section \ref{con-remarks} concludes this paper.

\section{Preliminaries}\label{pre}

Throughout this paper, $\mathbb{F}_{p^n}$ denotes the finite field with $p^n$ elements and ${\rm Tr}_{m}^n(x)=x+x^{p^m}+x^{p^{2m}}+\cdots +x^{p^{n-m}}$ denotes the trace function from $\mathbb{F}_{p^n}$ to $\mathbb{F}_{p^m}$, where $m$, $n$ are positive integers and $m|n$.

In this section,  we recall some basic definitions and present some results which will be used frequently in this paper.

 \begin{definition}\label{definition-DDT}\rm (\cite{def-DDT})
 Let $F(x)$ be a  mapping from $\mathbb{F}_{p^n}$ to itself. The Difference Distribution Table (DDT) of $F(x)$ is a $p^n \times p^n$ table where the entry at $(a,b)\in \mathbb{F}_{p^n}^2$ is defined by
$${\rm DDT}_{F}(a,b)=|\{x \in \gf_{p^n}: F(x+a)-F(x)=b \}|.$$

The mapping $F(x)$ is said to be \emph{differentially $\delta$-uniform} if $\Delta_F$=$\delta$ \cite{def-DDT}, and accordingly $\Delta_F$ is called the \emph{differential uniformity} of $F(x)$, where
$$
\Delta_F=\max _{a,b \in \mathbb{F}_{p^n},a\ne 0} {\rm DDT}_{F}(a,b).
$$
When $F(x)$ is used as an S-box inside a cryptosystem, the smaller the value $\Delta_F$ is, the better the contribution of $F(x)$ to the resistance against differential attack.
\end{definition}

The definitions of the second-order zero differential spectrum and the FBCT of $F(x)$ are given as follows.

\begin{definition}\label{definition-SOZ}\rm (\cite{def-FBCT})
Let $F(x)$ be a mapping from $\mathbb{F}_{p^n}$ to itself. The second-order zero differential spectrum with respect to $a$, $b$ of $F$ is defined as
$$\nabla_{F}(a,b)=|\{x \in \gf_{p^n}: F(x+a+b)-F(x+b)-F(x+a)+F(x)=0 \}|.$$
The second-order zero differential uniformity of $F(x)$ is defined by $\nabla_{F}=\max\{\nabla_{F}(a,b): a\ne b, a, b \in \mathbb{F}_{2^n}\setminus\{0\}\}$ for $p=2$ and $\nabla_{F}=\max\{\nabla_{F}(a,b): a, b \in \mathbb{F}_{p^n}\setminus\{0\}\}$ for $p>2$.
The mapping $F(x)$ is said to be second-order zero differential $k$-uniform if $\nabla_{F}=k$.
\end{definition}

\begin{definition}\label{definition-FBCT}\rm (\cite{def-FBCT})
Let $F(x)$ be a mapping from $\mathbb{F}_{2^n}$ to itself. The Feistel Boomerang Connectivity Table (FBCT) is a $2^n\times 2^n$ table defined for $(a,b)\in \mathbb{F}_{2^n}^2$ by
$${\rm FBCT}_{F}(a,b)=|\{x \in \gf_{2^n}: F(x)+F(x+a)+F(x+b)+F(x+a+b)=0 \}|.$$
Clearly, the FBCT satisfies ${\rm FBCT}_{F}(a,b)=2^n$ if $ab(a+b)=0$. Hence, the Feistel boomerang uniformity of $F(x)$ is defined by
$$
\beta(F)=\max _{a, b \in \mathbb{F}_{2^n}, ab(a+b)\ne 0} {\rm FBCT}_{F}(a,b).
$$
\end{definition}

The basic properties of the FBCT are studied in~\cite{def-FBCT}. Typically, the FBCT satisfies the following properties.
\begin{itemize}
\item Symmetry: ${\rm FBCT}_F(a,b)={\rm FBCT}_F(b,a)$ for all $a,b\in \mathbb{F}_{2^n}$.
\item Multiplicity: ${\rm FBCT}_F(a,b)\equiv 0\pmod 4$ for all $a,b\in \mathbb{F}_{2^n}$.
\item First line: ${\rm FBCT}_F(0,b)=2^n$ for all $b\in \mathbb{F}_{2^n}$.
\item First column: ${\rm FBCT}_F(a,0)=2^n$ for all $a\in \mathbb{F}_{2^n}$.
\item Diagonal: ${\rm FBCT}_F(a,a)=2^n$ for all $a\in \mathbb{F}_{2^n}$.
\item Equalities: ${\rm FBCT}_F(a,a)={\rm FBCT}_F(a,a+b)$ for all $a,b\in \mathbb{F}_{2^n}$.
\end{itemize}

We recall the following lemma which concerns quadratic equations in $\gf_{2^n}$.

\begin{lemma}\label{lemma1-qua-root}\rm (\cite{def-squ})
Let $a, b, c \in \mathbb{F}_{2^n}$, $a\ne 0$ and $F(x)=ax^2+bx+c$. Then
\begin{itemize}
\item [\rm (i)] $F(x)$ has exactly one root in $\mathbb{F}_{2^n}$ if and only if $b=0$;
\item [\rm (ii)] $F(x)$ has exactly two roots in $\mathbb{F}_{2^n}$ if and only if $b\ne 0$ and ${\rm Tr}_{1}^n(\frac{ac}{b^2})=0$,
\item [\rm (iii)] $F(x)$ has no root in $\mathbb{F}_{2^n}$ if and only if $b\ne 0$ and ${\rm Tr}_{1}^n(\frac{ac}{b^2})=1$.
\end{itemize}
\end{lemma}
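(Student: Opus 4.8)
The plan is to reduce the general quadratic to the normalized Artin--Schreier form $y^2+y=\gamma$ and then apply the classical trace criterion for its solvability over $\mathbb{F}_{2^n}$. Throughout we use that we are in characteristic $2$, so $x\mapsto x^2$ is the Frobenius automorphism and $\mathrm{Tr}_1^n(z^2)=\mathrm{Tr}_1^n(z)$ for all $z$.

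First I would dispose of the case $b=0$. Here $F(x)=ax^2+c$ with $a\neq 0$, so $F(x)=0$ is equivalent to $x^2=c/a$, which has the unique solution $x=(c/a)^{2^{n-1}}$ since squaring is a bijection of $\mathbb{F}_{2^n}$; this gives one root. Conversely, if $F$ has a single (necessarily double) root, then the formal derivative $F'(x)=b$ vanishes there, forcing $b=0$. Hence statement (i) is an equivalence, and in the remaining two cases we may assume $b\neq 0$.

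Next, assuming $b\neq 0$, I would apply the invertible substitution $x=(b/a)y$ and compute
$$F\!\left(\tfrac{b}{a}y\right)=a\cdot\tfrac{b^2}{a^2}y^2+b\cdot\tfrac{b}{a}y+c=\tfrac{b^2}{a}\bigl(y^2+y\bigr)+c,$$
so the roots of $F$ correspond bijectively to the solutions $y$ of $y^2+y=\gamma$ with $\gamma:=ac/b^2$. The key structural fact is that $L(y)=y^2+y$ is an $\mathbb{F}_2$-linear endomorphism of $\mathbb{F}_{2^n}$ with $\ker L=\{0,1\}=\mathbb{F}_2$, hence $\dim_{\mathbb{F}_2}\operatorname{im} L=n-1$; and since $\mathrm{Tr}_1^n(L(y))=\mathrm{Tr}_1^n(y^2)+\mathrm{Tr}_1^n(y)=0$ for every $y$, the image lies in the trace-zero hyperplane $\{\gamma:\mathrm{Tr}_1^n(\gamma)=0\}$, which also has dimension $n-1$, so the two coincide.

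Putting this together: if $\mathrm{Tr}_1^n(\gamma)=1$ then $y^2+y=\gamma$ has no solution, so $F$ has no root, giving (iii); if $\mathrm{Tr}_1^n(\gamma)=0$ then the solution set is a coset of $\ker L=\{0,1\}$, hence has exactly two elements $y_0$, $y_0+1$, yielding exactly two roots of $F$, giving (ii). Since the three hypotheses ($b=0$; $b\neq 0$ with $\mathrm{Tr}_1^n(\gamma)=0$; $b\neq 0$ with $\mathrm{Tr}_1^n(\gamma)=1$) are mutually exclusive and exhaustive, each stated ``if and only if'' follows. The only mildly delicate point is confirming that $\operatorname{im} L$ equals \emph{all} of the trace-zero hyperplane and not a proper subspace; this is exactly where the rank--nullity count is needed, and it is routine.
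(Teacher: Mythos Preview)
Your proof is correct and is the standard Artin--Schreier reduction argument. Note that the paper does not actually prove this lemma; it is quoted from Lidl--Niederreiter \cite{def-squ}, where the same substitution $x\mapsto (b/a)y$ and trace criterion are used, so your approach coincides with the classical one.
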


The following lemma described a method to solve $F(z)=z^{p^k}+z+B$ over $\gf_{2^n}$.

\begin{lemma}\label{lemma1-a^k=1-root}\rm (\cite{ex-FBCT})
Let $n$ and $k$ be positive integers such that $k<n$. Let $d={\rm gcd}(k, n)$, $l=n/d>1$, and $\beta_{l-1}={\rm Tr}_d^n(b)$. Then, the trinomial  $F(z)=z^{p^k}+z+B$ has no root if $\beta_{l-1}\ne 0$, and has $2^d$ roots $x+\delta\tau$ in $\gf_{2^n}$ if $\beta_{l-1}=0$ where
$\delta\in \gf_{2^d}$ and $\tau \in \gf_{2^n}$ with any element satisfying $\tau^{2^k-1}=1$, and
$$
x=\frac{1}{{\rm Tr}_d^n(c)}\sum\limits_{i=0}^{l-1}\Big(\sum\limits_{j=0}^ic^{2^{kj}}\Big)B^{2^{ki}},
$$
with any $c\in \gf_{2^n}^*$ satisfying ${\rm Tr}_d^n(c)\in \gf_{2^d}^*$.
\end{lemma}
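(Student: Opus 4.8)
The plan is to treat $L(z) := z^{2^k}+z$ as a linear operator and separate the question into a solvability criterion and an explicit construction of one solution. The starting point is that $L$ is $\gf_2$-linear --- in fact $\gf_{2^d}$-linear, since $d\mid k$ --- on $\gf_{2^n}$, and its kernel is $\{z:z^{2^k}=z\}=\gf_{2^{\gcd(k,n)}}=\gf_{2^d}$, a set of size $2^d$. Hence $F(z)=L(z)+B$ has either no root or exactly $2^d$ roots, and if $x$ is one root the full root set is the coset $x+\gf_{2^d}$, which can be written as $\{x+\delta\tau:\delta\in\gf_{2^d}\}$ for any fixed $\tau\in\gf_{2^n}^*$ with $\tau^{2^k-1}=1$ (such a $\tau$ automatically lies in $\gf_{2^d}^*$). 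So it remains to determine exactly when a root exists and, in that case, to produce one in closed form.

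For the solvability criterion I would first note that $\Tr_d^n\circ L=0$: since the automorphism $\sigma=(\cdot)^{2^k}$ has order exactly $l=n/d$ on $\gf_{2^n}$, raising to $2^k$ just cyclically permutes the conjugates $z,z^{2^k},\dots,z^{2^{k(l-1)}}$, so $\Tr_d^n(z^{2^k})=\Tr_d^n(z)$ and therefore $\Tr_d^n(L(z))=\Tr_d^n(z)+\Tr_d^n(z)=0$. Consequently $\beta_{l-1}=\Tr_d^n(B)=\Tr_d^n(L(x))=0$ is necessary for a root to exist, which gives the ``no root'' half of the statement. For the converse I would not argue by a dimension count but simply verify that the displayed $x$ is a root; this proves sufficiency and delivers the formula in one stroke.

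To carry out the verification, set $T:=\Tr_d^n(c)=\sum_{j=0}^{l-1}c^{2^{kj}}$, which lies in $\gf_{2^d}^*$ and hence satisfies $T^{2^k}=T$, so that $x^{2^k}+x=T^{-1}\bigl((Tx)^{2^k}+Tx\bigr)$ with $Tx=\sum_{i=0}^{l-1}\bigl(\sum_{j=0}^{i}c^{2^{kj}}\bigr)B^{2^{ki}}$. I would then raise $Tx$ to the $2^k$, reindex $i\mapsto i+1$, and use $c^{2^{kl}}=c$ and $B^{2^{kl}}=B$ to split off the single ``wrap-around'' term, which equals $TB$. Adding $Tx$ back and collapsing in characteristic $2$, the coefficients of $B^{2^{ki}}$ for $1\le i\le l-1$ each reduce to the lone term $c$, leaving $c\sum_{i=1}^{l-1}B^{2^{ki}}$ together with the $i=0$ term $cB$ of $Tx$; but $\sum_{i=1}^{l-1}B^{2^{ki}}=\Tr_d^n(B)+B=B$ exactly because $\beta_{l-1}=0$, so these two contributions are both $cB$ and cancel. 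What survives is $(Tx)^{2^k}+Tx=TB$, hence $x^{2^k}+x=B$. Finally one checks that such a $c$ exists because $\Tr_d^n$ is surjective.

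The only real obstacle is bookkeeping in this last computation: aligning the shifted index ranges of the two double sums, correctly isolating the wrap-around term that produces $TB$, and invoking $\Tr_d^n(B)=0$ at precisely the right spot to annihilate the residual $cB$ terms. There is no conceptual difficulty once one keeps the $\gf_{2^d}$-linear-algebra picture firmly in view --- the operator $\sigma+1$ on the $l$-dimensional $\gf_{2^d}$-space $\gf_{2^n}$, where $\sigma^l=1$ forces $\ker(\sigma+1)=\gf_{2^d}$ and makes $\sigma^{l-1}+\cdots+1=\Tr_d^n$ the natural ``averaging'' map --- so I would organize the telescoping around that structure to avoid index errors.
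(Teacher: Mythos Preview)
The paper does not prove this lemma; it is quoted from \cite{ex-FBCT} without proof, so there is no in-paper argument to compare against. Your proof is correct and self-contained: the identification of $\ker L=\gf_{2^d}$, the necessity of $\Tr_d^n(B)=0$ via $\Tr_d^n\circ L=0$, and the explicit verification that the displayed $x$ satisfies $x^{2^k}+x=B$ all go through as you describe. The one point worth making explicit (you use it implicitly when equating $\sum_{i=0}^{l-1}(\cdot)^{2^{ki}}$ with $\Tr_d^n$) is that $\sigma=(\cdot)^{2^k}$ generates the full Galois group of $\gf_{2^n}/\gf_{2^d}$, so the ``$2^k$-power trace'' $\sum_{i=0}^{l-1}z^{2^{ki}}$ coincides with the standard relative trace $\sum_{i=0}^{l-1}z^{2^{di}}$; this is what makes your $T=\Tr_d^n(c)$ and $\sum_i B^{2^{ki}}=\Tr_d^n(B)$ legitimate.
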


An interesting result on quartics equations is given as below.
\begin{lemma}\label{lemma1-quar-root}\rm (\cite{quar-root})
Let $F(x)=x^4+a_2x^2+a_1x+a_0$ with $a_0a_1\ne 0$ and the companion cubic $G(y)=y^3+a_2y+a_1$ with the roots $r_1$, $r_2$, $r_3$. When the roots exist in $\gf_{2^n}$, set $\omega_i=(a_0r_i^2)/a_1^2$. Let a polynomial $h$ as $h=(1, 2, 3, \cdots)$ over some field to mean that it decomposes as a product of degree $1$, $2$, $3$, $\cdots$, over that field. The factorization of $F(x)$ over $\gf_{2^n}$ is characterized as follows:
\begin{itemize}
\item [\rm (i)] $F=(1, 1, 1, 1)\Leftrightarrow G=(1, 1, 1)$ and ${\rm Tr}_1^n(\omega_1)={\rm Tr}_1^n(\omega_2)={\rm Tr}_1^n(\omega_3)=0$;
\item [\rm (ii)] $F=(2, 2)\Leftrightarrow G=(1, 1, 1)$ and ${\rm Tr}_1^n(\omega_1)=0$, ${\rm Tr}_1^n(\omega_2)={\rm Tr}_1^n(\omega_3)=1$;
\item [\rm (iii)] $F=(1, 3)\Leftrightarrow G=(3)$;
\item [\rm (iv)] $F=(1, 1, 2)\Leftrightarrow G=(1, 2)$ and ${\rm Tr}_1^n(\omega_1)=0$;
\item [\rm (v)] $F=(4)\Leftrightarrow G=(1, 2)$ and ${\rm Tr}_1^n(\omega_1)=1$.
\end{itemize}
\end{lemma}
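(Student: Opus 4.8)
The plan is to reduce the factorization of $F$ to two classical ingredients: a correspondence between the monic degree-two factors of $F$ and the roots of the companion cubic $G$, and the description of how the Frobenius automorphism $\phi\colon t\mapsto t^{2^n}$ acts on the four roots of $F$ versus the three roots of $G$. The five cases (i)--(v) then fall out of a short case analysis over the possible cycle types of $\phi$ in $S_4$.

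First I would record the separability facts. Since the characteristic is $2$ and $a_1\ne 0$, we have $F'(x)=a_1\ne 0$, so $\gcd(F,F')=1$ and $F$ has four distinct roots $x_1,x_2,x_3,x_4$ in $\overline{\mathbb{F}_{2^n}}$; likewise $G'(y)=y^2+a_2$ together with $a_1\ne 0$ forces $G$ separable with distinct roots $r_1,r_2,r_3$, all nonzero since $G(0)=a_1\ne 0$. Next I would determine when $F$ splits into two monic quadratics over $\mathbb{F}_{2^n}$. Writing $F(x)=(x^2+\alpha x+\beta)(x^2+\alpha' x+\beta')$ and equating the coefficient of $x^3$ gives $\alpha'=\alpha$; the remaining coefficients give $\alpha(\beta+\beta')=a_1$, $\beta\beta'=a_0$, and $\beta+\beta'+\alpha^2=a_2$. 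Since $a_1\ne 0$ we must have $\alpha\ne 0$, whence $\beta+\beta'=a_1/\alpha$ and $\alpha^3+a_2\alpha+a_1=0$; that is, $\alpha$ is a root of $G$. Conversely, for a root $r\in\mathbb{F}_{2^n}$ of $G$, the unordered pair $\{\beta,\beta'\}$ is the root set of $z^2+(a_1/r)z+a_0$, and after the substitution $z\mapsto (a_1/r)z$ this becomes $z^2+z+\omega=0$ with $\omega=a_0r^2/a_1^2$, which by Lemma~\ref{lemma1-qua-root} has both roots in $\mathbb{F}_{2^n}$ exactly when ${\rm Tr}_1^n(\omega)=0$. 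Hence $F$ admits a degree-$(2,2)$ factorization over $\mathbb{F}_{2^n}$ attached to $r_i$ precisely when $r_i\in\mathbb{F}_{2^n}$ and ${\rm Tr}_1^n(\omega_i)=0$.

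Then I would bring in the Galois picture. From the factorization above, the three roots of $G$ are exactly the pair-sums $r_1=x_1+x_2=x_3+x_4$, $r_2=x_1+x_3=x_2+x_4$, $r_3=x_1+x_4=x_2+x_3$, consistent with $r_1+r_2+r_3=0$ (the coefficient of $y^2$ in $G$). The factorization type of $F$ over $\mathbb{F}_{2^n}$ is the cycle type of $\phi$ acting on $\{x_1,x_2,x_3,x_4\}$, and that of $G$ is the cycle type of the induced action of $\phi$ on $\{r_1,r_2,r_3\}$; moreover, by separability, for a root $r_k\in\mathbb{F}_{2^n}$ of $G$ the automorphism $\phi$ either fixes both pairs of the partition of $\{x_1,\dots,x_4\}$ attached to $r_k$ or interchanges them, the former being exactly the case ${\rm Tr}_1^n(\omega_k)=0$ in which the degree-$(2,2)$ factorization attached to $r_k$ is defined over $\mathbb{F}_{2^n}$. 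It then remains to run through the five cycle types of $\phi$ in $S_4$ --- the identity, an element of type $(2,2)$, a transposition, a $3$-cycle, and a $4$-cycle --- computing in each case the induced permutation of $\{r_1,r_2,r_3\}$ and, for every $\mathbb{F}_{2^n}$-rational $r_i$, whether the associated pair-partition is fixed or has its two pairs interchanged; reading this off gives (i)--(v) after fixing the labelling conventions (in (ii), $\omega_1$ is the one with trace $0$; in (iv) and (v), $r_1$ is the unique $\mathbb{F}_{2^n}$-rational root of $G$). In particular, when $G=(1,1,1)$ all three $\omega_i$ lie in $\mathbb{F}_{2^n}$ and the identity ${\rm Tr}_1^n(\omega_1)+{\rm Tr}_1^n(\omega_2)+{\rm Tr}_1^n(\omega_3)={\rm Tr}_1^n\!\big(\tfrac{a_0}{a_1^2}(r_1+r_2+r_3)^2\big)=0$ forces either all three traces to vanish (so $\phi$ is trivial and $F=(1,1,1,1)$) or exactly one to vanish (so $F=(2,2)$), while $G=(3)$ cannot occur with $F=(4)$ because a $4$-cycle of $\phi$ induces a transposition, not a $3$-cycle, on $\{r_1,r_2,r_3\}$.

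The step I expect to be the main obstacle is not any single computation but the bookkeeping in this final case analysis: tracking which pair-partition of $\{x_1,\dots,x_4\}$ each $r_i$ encodes, and pinning down rigorously the equivalence that the degree-$(2,2)$ factorization attached to $r_i$ is defined over $\mathbb{F}_{2^n}$ iff ${\rm Tr}_1^n(\omega_i)=0$ iff $\phi$ does not interchange the two pairs of the corresponding partition. This last equivalence is the hinge that lets the Frobenius cycle type be deduced from the trace data. An alternative, fully elementary route avoids Galois theory altogether: for each root $r$ of $G$ in $\mathbb{F}_{2^n}$ one writes down the quadratic factors $x^2+rx+\beta$ explicitly, then factors each of these quadratics via Lemma~\ref{lemma1-qua-root} (the splitting being governed by ${\rm Tr}_1^n(\beta/r^2)$), and assembles the global factorization type by a direct case analysis on how many of the $r_i$ lie in $\mathbb{F}_{2^n}$ and on the relevant traces; this trades the conceptual economy of the Galois argument for a longer but self-contained computation.
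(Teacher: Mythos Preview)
The paper does not prove this lemma: it is stated as Lemma~\ref{lemma1-quar-root} with a citation to Leonard and Williams~\cite{quar-root} and is used as a black box in the proofs of Theorems~\ref{the-FBCT-7} and~\ref{the-FBCT-d=niho}. There is therefore no ``paper's own proof'' to compare against.

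That said, your sketch is essentially the classical Leonard--Williams argument and is correct. The key correspondence you set up --- that monic quadratic factors $x^2+\alpha x+\beta$ of $F$ force $\alpha$ to be a root of $G$, and that for each such root $r$ the pair $\{\beta,\beta'\}$ is governed by $z^2+z+\omega$ with $\omega=a_0r^2/a_1^2$ --- is exactly the mechanism in~\cite{quar-root}, as is the identification of the $r_i$ with the pair-sums $x_j+x_k$ and the reading of the Frobenius cycle type from the induced action on the three pair-partitions. Your parity observation ${\rm Tr}_1^n(\omega_1)+{\rm Tr}_1^n(\omega_2)+{\rm Tr}_1^n(\omega_3)=0$ when $G=(1,1,1)$ is also the standard way to rule out the impossible trace patterns. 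The only thing to be careful about is the bookkeeping you yourself flag: in cases (ii), (iv), (v) the labelling of which $\omega_i$ has trace $0$ or $1$ is a convention, not a consequence, and your write-up should make that explicit.
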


\section{The second-order zero differential spectra of functions in even characteristic}\label{even-result1}
This section is devoted to presenting a detailed study of the FBCT of two power mappings over $\gf_{2^n}$. The main results are given by the following theorems, which are derived through the computation of the number of solutions over $\mathbb{F}_{2^n}$ of the equation $F(x+a+b)+F(x+b)+F(x+a)+F(x)=0$.

Blondeau et al. \cite{BCC2011IT} determined the differential spectrum of $F(x)=x^7$ by means of the values of some Kloosterman sums and they showed that $F(x)$ is differentially $6$-uniform function over $\gf_{2^n}$ with $n\geq 4$ (where $n$ is odd or $n$ is even). In the following theorem, we compute its second-order zero differential spectrum of $F(x)$ over $\gf_{2^n}$,  where $n$ is odd or $n$ is even.

\begin{theorem}\label{the-FBCT-7}
Let $F(x)=x^7$ be a power mapping over $\gf_{2^n}$. For $a, b \in \gf_{2^n}$, let $c=\frac{a}{b}$, $a_0=(c^2+c+1)^2$, $a_1=c^2+c$, $a_2=c^2+c+1$, $\omega_1=\frac{a_0}{a_1^2}$, $\omega_2=\frac{a_0c^2}{a_1^2}$, $\omega_3=\frac{a_0(c+1)^2}{a_1^2}$. Then, $F(x)$ is second-order zero differential $4$-uniform. Moreover, when $n$ is even,
$$
{\nabla}_{F}(a,b)=
\begin{cases}
   2^n, &  {\rm if}\,\, ab(a+b)=0;\\
   4, &  {\rm if}\,\, ab(a+b)\ne 0,\,\, a\in \{b\omega, b(\omega+1)\}\\
      &   {\rm or}\,\, ab(a+b)\ne 0,\,\,  {\rm Tr_1^n}(\omega_1)={\rm Tr_1^n}(\omega_2)={\rm Tr_1^n}(\omega_3)=0;\\
   0, &  {\rm otherwise},\\
\end{cases}
$$
where $\omega$ and $\omega+1$ be the solution of $c^2+c+1$ in $\gf_{2^n}$.

When $n$ is odd,
$$
{\nabla}_{F}(a,b)=
\begin{cases}
   2^n, &  {\rm if}\,\, ab(a+b)=0;\\
   4, &  {\rm if}\,\, ab(a+b)\ne 0,\,\, {\rm Tr_1^n}(\omega_1)={\rm Tr_1^n}(\omega_2)={\rm Tr_1^n}(\omega_3)=0;\\
   0, &  {\rm otherwise}.\\
\end{cases}
$$
\end{theorem}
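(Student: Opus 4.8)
The plan is to count directly the number of $x\in\gf_{2^n}$ satisfying
$$(x+a+b)^7+(x+b)^7+(x+a)^7+x^7=0.$$
When $ab(a+b)=0$ the left-hand side vanishes identically in characteristic $2$, so $\nabla_F(a,b)=2^n$; from now on assume $ab(a+b)\ne 0$. Since every binomial coefficient $\binom{7}{i}$, $0\le i\le 7$, is odd, expanding the left-hand side yields the second-order derivative
$$D_aD_b(x^7)=ab(a+b)\Bigl(x^4+(a^2+ab+b^2)x^2+ab(a+b)x+(a^2+ab+b^2)^2\Bigr).$$
Dividing out the nonzero factor $ab(a+b)$, then substituting $x=by$ (a bijection of $\gf_{2^n}$) and dividing by $b^4$, the equation becomes $y^4+a_2y^2+a_1y+a_0=0$ with $a_2=c^2+c+1$, $a_1=c^2+c$, $a_0=(c^2+c+1)^2$, where $c=a/b$; thus $\nabla_F(a,b)$ equals the number of roots of this quartic in $\gf_{2^n}$.

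Next I would split on whether $a_0=0$. If $c^2+c+1=0$ — possible precisely when $n$ is even, and equivalent to $a\in\{b\omega,b(\omega+1)\}$ — then $a_2=0$, $a_1=1$, and the quartic becomes $y^4+y=y(y+1)(y^2+y+1)=0$, which has $4$ roots in $\gf_{2^n}$ since $\gf_4\subseteq\gf_{2^n}$; hence $\nabla_F(a,b)=4$. If $a_0\ne 0$ (automatic for odd $n$), then $a_0a_1\ne 0$ and Lemma~\ref{lemma1-quar-root} applies. A direct check shows that the companion cubic $G(z)=z^3+a_2z+a_1$ has the three roots $z=1$, $z=c$, $z=c+1$, which are pairwise distinct because $c\notin\{0,1\}$; thus $G=(1,1,1)$, only cases (i) and (ii) of the lemma can occur, and with $r_1=1$, $r_2=c$, $r_3=c+1$ one gets $\omega_i=a_0r_i^2/a_1^2$ equal to the $\omega_1,\omega_2,\omega_3$ in the statement.

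The step that makes everything fall into place is the identity $\omega_1+\omega_2+\omega_3=\frac{a_0}{a_1^2}\bigl(1+c^2+(c+1)^2\bigr)=0$, which forces ${\rm Tr}_1^n(\omega_1)+{\rm Tr}_1^n(\omega_2)+{\rm Tr}_1^n(\omega_3)=0$, so the number of $\omega_i$ with trace $1$ is $0$ or $2$. When all three traces vanish we are in case (i), $F=(1,1,1,1)$, and the quartic has $4$ roots; otherwise exactly two traces equal $1$, we are in case (ii), $F=(2,2)$, and the quartic has no root — the types $(1,3)$, $(1,1,2)$, $(4)$ never arise. Assembling the branches gives the two displayed formulas (for even $n$ one also collects the $a_0=0$ branch), whence $\nabla_F\le 4$; attainment of $4$ follows by taking $a=b\omega$ when $n$ is even, and for $n$ odd by checking that ${\rm Tr}_1^n(\omega_1)={\rm Tr}_1^n(\omega_2)=0$ — which reduce, via $\omega_1=(1+c^{-1}+(c+1)^{-1})^2$ and analogous expressions for $\omega_2,\omega_3$, to a pair of linear trace conditions on $c$ — is satisfiable. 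The main obstacle is exactly this last block of bookkeeping: confirming that the companion cubic always splits into three distinct linear factors and that the $\omega_i$-relation excludes all factorization types except $(1,1,1,1)$ and $(2,2)$, together with the attainment argument for odd $n$; everything else is routine expansion plus the cited lemmas.
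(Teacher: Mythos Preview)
Your proposal is correct and follows essentially the same route as the paper: reduce to the quartic $y^4+a_2y^2+a_1y+a_0$ after the substitution $y=x/b$, factor the companion cubic as $(z+1)(z+c)(z+c+1)$, and invoke Lemma~\ref{lemma1-quar-root}. Your treatment is in fact a bit tidier than the paper's—handling the $c^2+c+1=0$ branch by directly factoring $y^4+y$, and using the identity $\omega_1+\omega_2+\omega_3=0$ to rule out all factorization types except $(1,1,1,1)$ and $(2,2)$—whereas the paper splits off the candidates $y\in\{0,1,c,c+1\}$ first and leaves the ``otherwise $0$'' conclusion implicit; your attainment sketch for odd $n$ goes beyond what the paper supplies.
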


\begin{proof}
To prove this theorem, according to Definition \ref{definition-FBCT}, we need to
count the number of the solutions of
\begin{equation}\label{FBCT-d=7}
x^7+(x+a)^7+(x+b)^7+(x+a+b)^7=0,
\end{equation}
where $a,b\in \gf_{2^n}$.

When $a=0$ or $b=0$ or $a=b$ with $a\ne 0$, it can be easily seen that (\ref{FBCT-d=7}) holds for all $x\in \gf_{2^n}$, which gives
$$
 {\rm FBCT}_{F}(a,b)=2^n.
$$
Assume that $ab(a+b)\ne 0$.  Let $c=\frac{a}{b}$ and $y=\frac{x}{b}$, where $c\ne 0,1$.
Then, (\ref{FBCT-d=7}) is equivalent to
$$
b^7(y^7+(y+c)^7+(y+1)^7+(y+c+1)^7)=0.
$$
Since $b\ne 0$, thus we only need to consider the solutions of
\begin{equation}\label{FBCT-d=7-0}
y^7+(y+c)^7+(y+1)^7+(y+c+1)^7=0.
\end{equation}
If $y=0, 1, c, c+1$, then the above equation can be reduce to
$$
(c+1)^7+c^7+1=(c^2+c)(c^2+c+1)^2=0,
$$
we have $c^2+c+1=0$ since $c\ne 0, 1$. By Lemma \ref{lemma1-qua-root}, $c^2+c+1=0$ has no solution in $\gf_{2^n}$ when $n$ is odd, and it has two solutions in $\gf_{2^n}$ when $n$ is even. Let $\omega$ and $\omega+1$ be the solution of $c^2+c+1=0$, that is $c\in \{\omega, \omega+1\}$, then $y=0, 1, c, c+1$ are the solutions of (\ref{FBCT-d=7-0}).

Next ,we assume that $y\ne 0, 1, c, c+1$.
Expanding each of the terms of the above equation leads to
\begin{equation}\label{FBCT-d=7-3}
y^4(c^2+c)+y^2(c^4+c)+y(c^4+c^2)+(c^2+c)(c^2+c+1)^2=0.
\end{equation}
Since $c\ne 0,1$, (\ref{FBCT-d=7-3}) is equivalent to
\begin{equation}\label{FBCT-d=7-4}
y^4+(c^2+c+1)y^2+(c^2+c)y+(c^2+c+1)^2=0.
\end{equation}
When $n$ is odd, $c^2+c+1=0$ has no solution in $\gf_{2^n}$. When $n$ is even, if $c^2+c+1=0$, then (\ref{FBCT-d=7-4}) can be reduced to $y^4+(c^2+c)y=0$, it can be easily seen that the above equation has four solutions $y=0, 1, c, c+1$, which contradicts with $y\ne 0, 1, c, c+1$. Then we have $c^2+c+1\ne 0$ for $n$ is odd or $n$ is even.
By Lemma \ref{lemma1-quar-root}, the companion cubic polynomial of (\ref{FBCT-d=7-4}) is
\begin{equation*}
G(z)=z^3+(c^2+c+1)z+(c^2+c),
\end{equation*}
which can be factored as $(z+1)(z+c)(z+c+1)$ in $\gf_{2^n}$. If $G(z)=0$, we get $z_1=1$, $z_2=c$ and $z_3=c+1$. Let $a_0=(c^2+c+1)^2$, $a_1=c^2+c$, $a_2=c^2+c+1$, $\omega_1=\frac{a_0z_1^2}{a_1^2}=\frac{a_0}{a_1^2}$, $\omega_2=\frac{a_0z_2^2}{a_1^2}=\frac{a_0c^2}{a_1^2}$, $\omega_3=\frac{a_0z_3^2}{a_1^2}=\frac{a_0(c+1)^2}{a_1^2}$. Since $G(z)$ can be factored as $(1, 1, 1)$, from Lemma \ref{lemma1-quar-root}, we can easily seen that (\ref{FBCT-d=7-4}) has four solutions in $\gf_{2^n}$ if and only if  ${\rm Tr_1^n}(\omega_1)={\rm Tr_1^n}(\omega_2)={\rm Tr_1^n}(\omega_3)=0$. This completes the proof.
\end{proof}

Let $m\geq 5$ be an odd integer. For $d=2^{m+1}+3$, Blondeau et al. \cite{2010-IJICT} conjectured that the power mapping $F(x)=x^d$ over $\gf_{2^n}$ is differentially $8$-uniform, where $n=2m$. Xiong et al. \cite{2017-DCC} confirmed this conjecture and computed the differential spectrum of $F(x)$. In order to get further cryptographic properties of $F(x)=x^{2^{m+1}+3}$, we determine the second-order zero differential spectrum of $F(x)$ over $\gf_{2^n}$ (where $n=2m$ or $n=2m+1$) in the following theorem.

\begin{theorem}\label{the-FBCT-d=niho}
Let $F(x)=x^{2^{m+1}+3}$ be a power mapping over $\gf_{2^n}$. For $a, b \in \gf_{2^n}$, let $c=\frac{a}{b}$. When $n=2m$, let $a_0=\Big(\frac{c^{2^m}+c^2}{(c^{2^m}+c)}\Big)^4+\frac{(c^{2^{m+1}}+c)(c^4+c)}{(c^{2^m}+c)^2}$, $a_1=c^2+c$, $a_2=c^2+c+1$, $\omega_1=\frac{a_0}{a_1^2}$, $\omega_2=\frac{a_0c^2}{a_1^2}$, $\omega_3=\frac{a_0(c+1)^2}{a_1^2}$. Then, $F(x)$ is second-order zero differential $2^m$-uniform. Moreover,
$$
{\nabla}_{F}(a,b)=
\begin{cases}
  2^n, &  {\rm if}\,\, ab(a+b)=0;\\
   4, &   {\rm if}\,\,  ab(a+b)\ne 0,\,\, a\in \{b\omega, b(\omega+1)\}\\
      &  {\rm or}\,\,  ab(a+b)\ne 0,\,\, \frac{a}{b}\in \gf_{2^n}\backslash \gf_{2^m},\,\,
       {\rm Tr_1^n}(\omega_1)={\rm Tr_1^n}(\omega_2)={\rm Tr_1^n}(\omega_3)=0; \\
   2^m, &  {\rm if}\,\, ab(a+b)\ne 0,\,\, \frac{a}{b}\in \gf_{2^m};\\
   0, &  {\rm otherwise}.\\
\end{cases}
$$
When $n=2m+1$, let $a_0=\frac{(c^{2^m}+c^2)^2(c^{2^{m+1}}+c^2)}{c^{2^{m+2}}+c^2}+\frac{(c^{2^{m+1}}+c)(c^4+c)}{c^{2^{m+1}}+c^2}$, $a_1=c^2+c$, $a_2=c^2+c+1$, $\omega_1=\frac{a_0}{a_1^2}$, $\omega_2=\frac{a_0c^2}{a_1^2}$, $\omega_3=\frac{a_0(c+1)^2}{a_1^2}$. Then, $F(x)$ is second-order zero differential $4$-uniform. Moreover,
$$
{\nabla}_{F}(a,b)=
\begin{cases}
   4, &  {\rm if}\,\,  ab(a+b)\ne 0,\,\, \frac{a}{b}\in \gf_{2^n}\backslash \gf_{2^m},\,\, {\rm Tr_1^n}(\omega_1)={\rm Tr_1^n}(\omega_2)={\rm Tr_1^n}(\omega_3)=0;\\
   2^n, &  {\rm if}\,\, ab(a+b)=0;\\
   0, &  {\rm otherwise}.\\
\end{cases}
$$
\end{theorem}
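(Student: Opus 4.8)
The plan is to imitate the proof of Theorem~\ref{the-FBCT-7}: normalise the defining equation, dispose of the degenerate values of the parameter by direct inspection, and reduce the generic case to a quartic governed by Lemma~\ref{lemma1-quar-root}. By Definition~\ref{definition-FBCT} one counts $x\in\gf_{2^n}$ with $F(x)+F(x+a)+F(x+b)+F(x+a+b)=0$, and for $ab(a+b)=0$ this holds identically, so $\nabla_F(a,b)=2^n$. Assume $ab(a+b)\neq0$. Writing $d=2^{m+1}+3=2^{m+1}+2+1$ we have $F(x)=x^{2^{m+1}}\cdot x^{3}$ with $x\mapsto x^{2^{m+1}}$ additive, so a direct expansion of the four-term sum, division by $b^{d}$, and the substitutions $c=a/b$, $y=x/b$ turn the equation into
\[
(c^{2}+c)\,y^{2^{m+1}}+(c^{2^{m+1}}+c)\,y^{2}+(c^{2^{m+1}}+c^{2})\,y+(c^{2}+c+1)(c^{2^{m+1}}+c)=0,
\]
which I denote $(\star)$. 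Note that $(\star)$ is invariant under $y\mapsto y+1$, so $\nabla_F(a,b)$ is even, and that its left-hand side takes the common value $(c^{2}+c+1)(c^{2^{m+1}}+c)$ at each of $y=0,1,c,c+1$.

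Hence $\{0,1,c,c+1\}$ lies in the solution set of $(\star)$ precisely when $(c^{2}+c+1)(c^{2^{m+1}}+c)=0$. For $n=2m$ with $m$ odd, $\gcd(m+1,2m)=2$, so (as $c\neq0,1$) the condition $c^{2^{m+1}}=c$ is equivalent to $c^{2}+c+1=0$; in that case $(\star)$ collapses to $y^{2^{m+1}}=y$, whose solutions in $\gf_{2^n}$ are exactly $\gf_{4}=\{0,1,c,c+1\}$, giving $\nabla_F(a,b)=4$ (the case $a\in\{b\omega,b(\omega+1)\}$). Still for $n=2m$, if $c\in\gf_{2^m}\setminus\{0,1\}$ then $c^{2^{m+1}}=c^{2}$ and $(\star)$ reduces to $(y^{2^m}+y)^{2}=c^{2}+c+1$, i.e.\ $y^{2^m}+y=(c^{2}+c+1)^{2^{m-1}}\in\gf_{2^m}$; this element has trace $0$ over $\gf_{2^m}$, so the Artin--Schreier equation has $2^m$ solutions and $\nabla_F(a,b)=2^m$. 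When $n=2m+1$ both subcases are vacuous, since $\gf_{4}\not\subseteq\gf_{2^{2m+1}}$ and $\gf_{2^m}\cap\gf_{2^{2m+1}}=\gf_{2}$; so one needs only the generic case below.

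In the generic case --- $c\in\gf_{2^n}\setminus\gf_{2^m}$, $c\neq0,1$, $c^{2}+c+1\neq0$ --- let $\phi(t)=t^{2^m}$. Since $2^{m+1}\cdot2^m$ is $\equiv2$ mod $2^n-1$ when $n=2m$ and $\equiv1$ when $n=2m+1$, applying $\phi$ to $(\star)$ produces an equation in which the highest power of $y$ is again $y^{2^{m+1}}$; eliminating $y^{2^{m+1}}$ between it and $(\star)$ and solving for $y^{2^m}$ --- legitimate because the resulting coefficient of $y^{2^m}$ is the $\phi$-image of $c^{2^{m+1}}+c^{2}$, which is nonzero precisely as $c\notin\gf_{2^m}$ --- yields $y^{2^m}=Q(y)$ for an explicit quadratic $Q$. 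Since $(\star)$ itself gives $y^{2^{m+1}}=P(y)$ for an explicit quadratic $P$, and $y^{2^{m+1}}=(y^{2^m})^{2}=Q(y)^{2}$, we obtain the quartic $Q(y)^{2}=P(y)$; after dividing by the leading coefficient of $Q(y)^{2}$ it takes the form $y^{4}+a_2y^{2}+a_1y+a_0=0$, and a lengthy but routine simplification --- using the relations $c^{2^{n}}=c$ and $c^{2^{2m}}=c$ (the latter for $n=2m$) --- produces exactly $a_1=c^{2}+c$, $a_2=c^{2}+c+1$, and the $a_0$ displayed in the theorem. One checks $a_0\neq0$, so Lemma~\ref{lemma1-quar-root} applies.

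Finally, invoke Lemma~\ref{lemma1-quar-root}. The companion cubic $G(z)=z^{3}+a_2z+a_1=z^{3}+(c^{2}+c+1)z+(c^{2}+c)$ factors as $(z+1)(z+c)(z+c+1)$ for every $c$, so $G=(1,1,1)$ with roots $z_1=1$, $z_2=c$, $z_3=c+1$, and only parts (i) and (ii) of the lemma can occur. Since $z_1+z_2+z_3=0$ we get $\omega_1+\omega_2+\omega_3=\frac{a_0}{a_1^{2}}(z_1+z_2+z_3)^{2}=0$, hence $\Tr_1^n(\omega_1)+\Tr_1^n(\omega_2)+\Tr_1^n(\omega_3)=0$; thus either all three traces vanish and the quartic has $4$ roots in $\gf_{2^n}$, or exactly two of them equal $1$ and the quartic has no root in $\gf_{2^n}$. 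Together with the first two paragraphs and the parity of $\nabla_F(a,b)$, this yields exactly the stated values of $\nabla_F(a,b)$ for $n=2m$ and for $n=2m+1$.

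The step I expect to be the main obstacle is the generic case: carrying out the Frobenius elimination and the subsequent simplification so that the constant term comes out precisely as the claimed $a_0$, and --- the genuinely delicate point --- verifying that the quartic and $(\star)$ have the same solution set in $\gf_{2^n}$, i.e.\ that the elimination introduces no spurious roots. A convenient way to settle the latter is to take the explicit factorisation of the quartic furnished by Lemma~\ref{lemma1-quar-root}(i) and check directly that each of its four roots satisfies $(\star)$; everything else (the normalisation in paragraph one, the two degenerate families, and the application of Lemma~\ref{lemma1-quar-root}) is routine.
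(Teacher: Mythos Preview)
Your proposal is correct and follows the same route as the paper: normalise to $(\star)$, treat the degenerate families separately (the paper cites Lemma~\ref{lemma1-a^k=1-root} for the $c\in\gf_{2^m}$ count where you invoke Artin--Schreier directly), and in the generic case raise $(\star)$ to the $2^m$-th power, eliminate $y^{2^{m+1}}$, square the resulting expression for $y^{2^m}$ and substitute back to obtain the quartic with companion cubic $(z+1)(z+c)(z+c+1)$, then apply Lemma~\ref{lemma1-quar-root}. The only notable differences are that you explicitly flag the spurious-roots issue in the elimination (the paper simply asserts the quartic has the same solutions as $(\star)$) and add the pleasant observation $\omega_1+\omega_2+\omega_3=0$; your tacit use of $m$ odd when collapsing $c^{2^{m+1}}=c$ into $c^2+c+1=0$ matches the standing hypothesis stated in the paragraph preceding the theorem.
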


\begin{proof}
To prove this theorem, according to Definition \ref{definition-FBCT}, we need to
count the number of the solutions of
\begin{equation}\label{FBCT-d=niho}
x^{2^{m+1}+3}+(x+a)^{2^{m+1}+3}+(x+b)^{2^{m+1}+3}+(x+a+b)^{2^{m+1}+3}=0,
\end{equation}
where $a,b\in \gf_{2^n}$.

When $a=0$ or $b=0$ or $a=b$ with $a\ne 0$, it can be easily seen that (\ref{FBCT-d=niho}) holds for all $x\in \gf_{2^n}$, which gives
$$
 {\rm FBCT}_{F}(a,b)=2^n.
$$
Assume that $ab(a+b)\ne 0$. Let $c=\frac{a}{b}$ and $y=\frac{x}{b}$, we have $c\ne 0,1$.
Then, (\ref{FBCT-d=niho}) is equivalent to
$$
b^{2^{m+1}+3}(y^{2^{m+1}+3}+(y+c)^{2^{m+1}+3}+(y+1)^{2^{m+1}+3}+(y+c+1)^{2^{m+1}+3})=0.
$$
Since $b\ne 0$, thus we only need to consider the solutions of
\begin{equation}\label{FBCT-d=niho-0}
y^{2^{m+1}+3}+(y+c)^{2^{m+1}+3}+(y+1)^{2^{m+1}+3}+(y+c+1)^{2^{m+1}+3}=0.
\end{equation}
If $y=0$, $1$, $c$ or $c+1$, then (\ref{FBCT-d=niho-0}) becomes
\begin{equation*}
(c^{2^{m+1}}+c)(c^2+c+1)=0.
\end{equation*}
When $n=2m+1$, then $c^2+c+1$ has no solution in $\gf_{2^n}$, thus we get $c^{2^{m+1}}+c=0$, this implies that $c\in \gf_{2^{m+1}}$. Since ${\rm gcd}(m+1, 2m+1)=1$, we have $c\in \gf_{2}$, which contradicts with $c\ne 0, 1$.  Then we get $y=0$, $1$, $c$ or $c+1$ are not solutions of (\ref{FBCT-d=niho-0}). When $n=2m$, we have that $c^2+c+1$ has two solutions in $\gf_{2^n}$. Let $\omega$ and $\omega+1$ be the solution of the equation $c^2+c+1$ in $\gf_{2^n}$, that is $c\in \{\omega, \omega+1\}$, then $y=0$, $1$, $c$ or $c+1$ are the solutions of (\ref{FBCT-d=niho-0}). Next, assume that $y\ne 0, 1, c, c+1$.
Expanding each of the terms of (\ref{FBCT-d=niho-0}) gives
\begin{equation}\label{FBCT-d=niho-1}
y^{2^{m+1}}(c^2+c)+y^2(c^{2^{m+1}}+c)+y(c^{2^{m+1}}+c^2)+(c^{2^{m+1}}+c)(c^2+c+1)=0.
\end{equation}

We start by considering the case $n=2m$.

{\textbf{Case 1:}} Assume that $c\in \gf_{2^m}$. Then $c^{2^m}=c$ and $c^{2^{m+1}}=c^2$. (\ref{FBCT-d=niho-1}) reduces to
\begin{equation*}
y^{2^{m+1}}(c^2+c)+y^2(c^2+c)+(c^2+c)(c^2+c+1)=0,
\end{equation*}
which can be rewritten as
\begin{equation}\label{FBCT-d=niho-2^m}
y^{2^{m+1}}+y^2+(c^2+c+1)=0,
\end{equation}
since $c\ne 0, 1$.
Let $z=y^2$, (\ref{FBCT-d=niho-2^m}) is equivalent to
\begin{equation}\label{FBCT-d=niho-2^m-1}
z^{2^m}+z+(c^2+c+1)=0.
\end{equation}
By Lemma \ref{lemma1-a^k=1-root}, in our case, we have ${\rm Tr}_d^n(c^2+c+1)=c^2+c+1+ (c^2+c+1)^{2^m}=0$, since $d={\rm gcd}(n,m)=m$, $l=n/d=2$ when $n=2m$. This implies that (\ref{FBCT-d=niho-2^m-1}) has $2^m$ solutions when $c\in \gf_{2^m}$.

{\textbf{Case 2:}} Assume that $c\in \gf_{2^n}\backslash \gf_{2^m}$. Raising $2^m$-th power to (\ref{FBCT-d=niho-1}) leads to
\begin{equation}\label{FBCT-d=niho-2}
y^2(c^{2^{m+1}}+c^{2^m})+y^{2^{m+1}}(c^2+c^{2^m})+y^{2^m}(c^2+c^{2^{m+1}})+(c^2+c^{2^m})(c^{2^{m+1}}+c^{2^m}+1)=0.
\end{equation}
Since $c\ne 0, 1$, then $c^2+c\ne 0$ and $c^2+c^{2^m}\ne 0$.
If $c^2+c^{2^m}=0$, namely, $c^{2^{m+1}}+c=0$, then (\ref{FBCT-d=niho-1}) can be reduced to $y^{2^{m+1}}+y=0$. It can be easily seen that the above equation has four solutions in $\gf_{2^n}$, which can be written as $y=0, 1, c, c+1$, a contradiction. Thus we have $c^2+c^{2^m}\ne 0$. Multiplying $c^2+c^{2^m}$ and $c^2+c$ on both sides of (\ref{FBCT-d=niho-1}) and (\ref{FBCT-d=niho-2}) and then summing up these two equations gives
$$
(c^2+c)(c^2+c^{2^{m+1}})y^{2^m}+(c^{2^m}+c)^3y^2+(c^2+c^{2^m})(c^{2^{m+1}}+c^2)y+(c^{2^m}+c^2)^2(c^{2^m}+c)=0,
$$
Since $c\ne 0,1$ and $c\notin \gf_{2^m}$, then we have $c^2+c\ne 0$ and $c^2+c^{2^{m+1}}\ne 0$. Thus the above equation is equivalent to
$$
y^{2^m}=\Big(\frac{(c^{2^m}+c)^3}{(c^2+c)(c^2+c^{2^{m+1}})}\Big)y^2+\Big(\frac{(c^2+c^{2^m})(c^{2^{m+1}}+c^2)}{(c^2+c)(c^2+c^{2^{m+1}})}\Big)y
+\frac{(c^{2^m}+c^2)^2(c^{2^m}+c)}{(c^2+c)(c^2+c^{2^{m+1}})}.
$$
Squaring both sides of the above equation and substituting it into (\ref{FBCT-d=niho-1}) and then multiplying $(c^2+c)(c^{2^m}+c)^4$ on the both sides of this equation, we have
$$
\begin{aligned}
&(c^{2^m}+c)^6y^4+(c^{2^m}+c)^6(c^2+c+1)y^2+(c^{2^m}+c)^6(c^2+c)y\\
&+(c^{2^m}+c^2)^4(c^{2^m}+c)^2+(c^4+c)(c^{2^m}+c)^4(c^{2^m+1}+c)=0.
\end{aligned}
$$
Since $c^{2^m}+c\ne 0$, then the above equation is equivalent to
\begin{equation}\label{FBCT-d=niho-3}
y^4+(c^2+c+1)y^2+(c^2+c)y+\Big(\frac{c^{2^m}+c^2}{(c^{2^m}+c)}\Big)^4+\frac{(c^{2^{m+1}}+c)(c^4+c)}{(c^{2^m}+c)^2}=0.
\end{equation}
If $\Big(\frac{c^{2^m}+c^2}{(c^{2^m}+c)}\Big)^4+\frac{(c^{2^{m+1}}+c)(c^4+c)}{(c^{2^m}+c)^2}=0$, then (\ref{FBCT-d=niho-3}) can be reduced to
$$y^4+(c^2+c+1)y^2+(c^2+c)y=y(y+1)(y+c)(y+c+1)=0,$$
this implies that (\ref{FBCT-d=niho-3}) has four solutions in $\gf_{2^n}$, namely, $y=0$, $y=1$, $y=c$ and $y=c+1$, which contradicts with $y\ne 0, 1, c, c+1$, then we have $\Big(\frac{c^{2^m}+c^2}{(c^{2^m}+c)}\Big)^4+\frac{(c^{2^{m+1}}+c)(c^4+c)}{(c^{2^m}+c)^2}\ne 0$. By Lemma \ref{lemma1-quar-root}, the companion cubic polynomial of (\ref{FBCT-d=niho-3}) is
\begin{equation*}
G(z)=z^3+(c^2+c+1)z+(c^2+c),
\end{equation*}
which can be factored as $(z+1)(z+c)(z+c+1)$ in $\gf_{2^n}$. If $G(z)=0$, we get $z_1=1$, $z_2=c$ and $z_3=c+1$. Let $a_0=\Big(\frac{c^{2^m}+c^2}{(c^{2^m}+c)}\Big)^4+\frac{(c^{2^{m+1}}+c)(c^4+c)}{(c^{2^m}+c)^2}$, $a_1=c^2+c$, $a_2=c^2+c+1$, $\omega_1=\frac{a_0z_1^2}{a_1^2}=\frac{a_0}{a_1^2}$, $\omega_2=\frac{a_0z_2^2}{a_1^2}=\frac{a_0c^2}{a_1^2}$, $\omega_3=\frac{a_0z_3^2}{a_1^2}=\frac{a_0(c+1)^2}{a_1^2}$. Since $G(z)$ can be factored as $(1, 1, 1)$, from Lemma \ref{lemma1-quar-root}, we can easily seen that (\ref{FBCT-d=niho-3}) has four solutions in $\gf_{2^n}$ if and only if  ${\rm Tr_1^n}(\omega_1)={\rm Tr_1^n}(\omega_2)={\rm Tr_1^n}(\omega_3)=0$.

We now consider the case $n=2m+1$ as follows.

Raising $2^m$-th power to (\ref{FBCT-d=niho-1}) leads to
\begin{equation}\label{FBCT-d=niho-odd-1}
y(c^{2^{m+1}}+c^{2^m})+y^{2^{m+1}}(c+c^{2^m})+y^{2^m}(c+c^{2^{m+1}})+(c+c^{2^m})(c^{2^{m+1}}+c^{2^m}+1)=0.
\end{equation}
Since $c\ne 0, 1$, then we have $c^2+c\ne 0$, $c^{2^m}+c\ne 0$ and $c+c^{2^{m+1}}\ne 0$.
Assume that $c^{2^m}+c=0$, then we have $c\in \gf_{2^m}$. Since ${\rm gcd}(m, 2m+1)=1$, we get $c\in \gf_{2}$, it leads to a contradiction. Assume that $c+c^{2^{m+1}}=0$, then we have $c\in \gf_{2^{m+1}}$. Since ${\rm gcd}(m+1, 2m+1)=1$, we get $c\in \gf_{2}$, which contradicts with $c\ne 0, 1$. Multiplying $c+c^{2^m}$ and $c^2+c$ on both sides of (\ref{FBCT-d=niho-1}) and (\ref{FBCT-d=niho-odd-1}) and then summing up these two equations gives
$$
(c^2+c)(c+c^{2^{m+1}})y^{2^m}+(c+c^{2^m})(c^{2^{m+1}}+c)y^2+(c^2+c^{2^m})(c^{2^{m+1}}+c)y+(c^{2^m}+c^2)(c^{2^m}+c)^2=0,
$$
which can be rewritten as
$$
y^{2^m}=\Big(\frac{(c+c^{2^m})(c^{2^{m+1}}+c)}{(c^2+c)(c+c^{2^{m+1}})}\Big)y^2+\Big(\frac{(c^2+c^{2^m})(c^{2^{m+1}}+c)}{(c^2+c)(c+c^{2^{m+1}})}\Big)y+\frac{(c^{2^m}+c^2)(c^{2^m}+c)^2}{(c^2+c)(c+c^{2^{m+1}})}.
$$
Squaring both sides of the above equation and substituting it into (\ref{FBCT-d=niho-1}) and then multiplying $(c^2+c)(c^2+c^{2^{m+2}})$ on the both sides of this equation, we have
$$
\begin{aligned}
&(c^{2^{m+2}}+c^2)^{2^m+1}y^4+(c^{2^{m+2}}+c^2)^{2^m+1}(c^2+c+1)y^2+(c^{2^{m+2}}+c^2)^{2^m+1}(c^2+c)y\\
&+(c^{2^m}+c^2)^2(c^{2^m}+c)^4+(c^{2^{m+1}}+c)^3(c^4+c)=0.
\end{aligned}
$$
Since $c^{2^{m+1}}+c\ne 0$, then the above equation is equivalent to
\begin{equation}\label{FBCT-d=niho-odd-2}
y^4+(c^2+c+1)y^2+(c^2+c)y+\frac{(c^{2^m}+c^2)^2(c^{2^{m+1}}+c^2)}{c^{2^{m+2}}+c^2}+\frac{(c^{2^{m+1}}+c)(c^4+c)}{c^{2^{m+1}}+c^2}=0.
\end{equation}
If $\frac{(c^{2^m}+c^2)^2(c^{2^{m+1}}+c^2)}{c^{2^{m+2}}+c^2}+\frac{(c^{2^{m+1}}+c)(c^4+c)}{c^{2^{m+1}}+c^2}=0$, then (\ref{FBCT-d=niho-odd-2}) can be reduced to
$$y^4+(c^2+c+1)y^2+(c^2+c)y=y(y+1)(y+c)(y+c+1)=0,$$
this implies that (\ref{FBCT-d=niho-odd-2}) has four solutions in $\gf_{2^n}$, namely, $y=0$, $y=1$, $y=c$ and $y=c+1$, which contradicts with $y\ne 0, 1, c, c+1$. Then we have $\frac{(c^{2^m}+c^2)^2(c^{2^{m+1}}+c^2)}{c^{2^{m+2}}+c^2}+\frac{(c^{2^{m+1}}+c)(c^4+c)}{c^{2^{m+1}}+c^2}\ne 0$. By Lemma \ref{lemma1-quar-root}, the companion cubic polynomial of (\ref{FBCT-d=niho-3}) is
\begin{equation*}
G(z)=z^3+(c^2+c+1)z+c^2+c,
\end{equation*}
which can be factored as $(z+1)(z+c)(z+c+1)$ in $\gf_{2^n}$. If $G(z)=0$, we get $z_1=1$, $z_2=c$ and $z_3=c+1$. Let $a_0=\frac{(c^{2^m}+c^2)^2(c^{2^{m+1}}+c^2)}{c^{2^{m+2}}+c^2}+\frac{(c^{2^{m+1}}+c)(c^4+c)}{c^{2^{m+1}}+c^2}$, $a_1=c^2+c$, $a_2=c^2+c+1$, $\omega_1=\frac{a_0z_1^2}{a_1^2}=\frac{a_0}{a_1^2}$, $\omega_2=\frac{a_0z_2^2}{a_1^2}=\frac{a_0c^2}{a_1^2}$, $\omega_3=\frac{a_0z_3^2}{a_1^2}=\frac{a_0(c+1)^2}{a_1^2}$. Since $G(z)$ can be factored as $(1, 1, 1)$, from Lemma \ref{lemma1-quar-root}, we can easily seen that (\ref{FBCT-d=niho-odd-2}) has four solutions in $\gf_{2^n}$ if and only if  ${\rm Tr_1^n}(\omega_1)={\rm Tr_1^n}(\omega_2)={\rm Tr_1^n}(\omega_3)=0$. This completes the proof.
\end{proof}

\section{The second-order zero differential spectra of functions in odd characteristic}\label{odd-result1}

In this section, we deal with the computation of the second-order zero differential spectra of the function $F(x)=x^5$ over $\gf_{p^n}$ ($p>2$ and $p\ne 5$) and the function $F(x)=x^7$ over $\gf_{p^n}$ ($p>2$ and $p\ne 7$) in Theorem \ref{d=5} and Theorem \ref{d=7}, respectively.

\begin{theorem}\label{d=5}
Let $F(x)=x^5$ be a power mapping over $\gf_{p^n}$, where $p$ is an odd prime and $p\ne 5$. For $a, b \in \gf_{p^n}$, then $F(x)$ is second-order zero differential $3$-uniform. Moreover,
$$
{\nabla}_{F}(a,b)=
\begin{cases}
   p^n, &  {\rm if}\,\, ab=0,\\
   1, &  {\rm if}\,\,  \eta(-(a^2+b^2))=-1; \\
   3, &  {\rm if}\,\, \eta(-(a^2+b^2))=1;\\
\end{cases}
$$
where $\eta$ be the quadratic character of $\mathbb{F}_{p^n}$.
\end{theorem}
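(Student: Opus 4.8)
First I would restate the problem, following Definition~\ref{definition-SOZ}, as that of counting the solutions $x\in\gf_{p^n}$ of $(x+a+b)^5-(x+a)^5-(x+b)^5+x^5=0$. If $ab=0$, say $a=0$, the left-hand side becomes $(x+b)^5-x^5-(x+b)^5+x^5$, which is identically $0$; the same happens for $b=0$, so $\nabla_F(a,b)=p^n$ in this case. Hence from now on I would assume $ab\neq 0$.

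The heart of the argument is an algebraic simplification. Since $p$ is odd I would substitute $x=t-\tfrac{a+b}{2}$ and set $s=\tfrac{a+b}{2}$, $r=\tfrac{a-b}{2}$, so that the equation becomes $\big[(t+s)^5+(t-s)^5\big]-\big[(t+r)^5+(t-r)^5\big]=0$. In each bracket the odd powers of $s$ (resp.\ $r$) cancel, leaving $2(t^5+10t^3s^2+5ts^4)-2(t^5+10t^3r^2+5tr^4)$; using $s^2-r^2=ab$ and $s^2+r^2=\tfrac{a^2+b^2}{2}$, this collapses to $20\,ab\cdot t\big(t^2+\tfrac{a^2+b^2}{4}\big)$. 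Because $p\neq 2,5$ and $ab\neq 0$, the factor $20\,ab$ is a nonzero constant, so the problem reduces to counting the roots in $\gf_{p^n}$ of $t\big(t^2+\tfrac{a^2+b^2}{4}\big)=0$.

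Finally I would count: $t=0$ is always a root, and the quadratic factor contributes the solutions of $t^2=-\tfrac{a^2+b^2}{4}$. When $a^2+b^2\neq 0$ these are nonzero and number $1+\eta\!\big(-\tfrac{a^2+b^2}{4}\big)=1+\eta(-(a^2+b^2))$ (since $\tfrac14$ is a square), so $\nabla_F(a,b)=2+\eta(-(a^2+b^2))$, equal to $3$ if $\eta(-(a^2+b^2))=1$ and to $1$ if $\eta(-(a^2+b^2))=-1$; in the degenerate case $a^2+b^2=0$ (which can occur only when $-1$ is a square in $\gf_{p^n}$) the equation is $t^3=0$, with the single root $t=0$. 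Since $t\mapsto x=t-\tfrac{a+b}{2}$ is a bijection of $\gf_{p^n}$, these numbers are exactly the values $\nabla_F(a,b)$, and maximizing over nonzero $a,b$ yields $\nabla_F=3$. I do not expect any serious obstacle: the only delicate point is the cancellation in the middle paragraph, and the chosen translation $x\mapsto t-\tfrac{a+b}{2}$ is precisely what kills both the quadratic and the constant coefficient of the cubic. (Alternatively one could divide the original difference by $5ab$ and then complete the cube, arriving at the same $t\big(t^2+\tfrac{a^2+b^2}{4}\big)$, but the symmetrized form above makes the vanishing of the constant term transparent.)
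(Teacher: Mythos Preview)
Your proof is correct and follows essentially the same route as the paper: expand the second-order difference, observe it is a cubic in $x$ once $ab\neq 0$, and eliminate the quadratic term by the shift $x\mapsto t-\tfrac{a+b}{2}$ to reach $t\bigl(t^2+\tfrac{a^2+b^2}{4}\bigr)=0$. The only noteworthy difference is organizational: the paper treats $p=3$ separately (factoring the cubic over $\gf_{3^n}$ by hand as $(x-(a+b))(x^2+(a+b)x-(a-b)^2)$) and only performs the shift for $p>5$, whereas your symmetrized computation with $s=\tfrac{a+b}{2}$, $r=\tfrac{a-b}{2}$ handles all odd $p\neq 5$ at once; you also explicitly dispose of the boundary case $a^2+b^2=0$, which the paper's statement leaves implicit.
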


\begin{proof}
To prove this theorem, according to Definition \ref{definition-SOZ}, we need to
count the number of the solutions of
$$F(x+a+b)-F(x+b)-F(x+a)+F(x)=0,$$
i.e.,
\begin{equation}\label{diff-d=5}
(x+a+b)^5-(x+b)^5-(x+a)^5+x^5=0,
\end{equation}
where $a,b\in \gf_{p^n}$.

If $ab=0$, then ${\nabla}_{F}(a,b)=p^n$. For $a, b \in \gf_{p^n}^*$, expanding each of the terms of (\ref{diff-d=5}) leads to
$$
20abx^3+30(a^2b+ab^2)x^2+5(4a^3b+6a^2b^2+4ab^3)x+5(a^4b+2a^3b^2+2a^2b^3+ab^4)=0.
$$
Since $ab\ne 0$, the above equation is equivalent to
\begin{equation}\label{diff-d=5-2}
x^3+\frac{3(a+b)}{2}x^2+\frac{2a^2+3ab+2b^2}{2}x+\frac{a^3+2a^2b+2ab^2+b^3}{4}=0.
\end{equation}

When $p=3$, (\ref{diff-d=5-2}) becomes
$$
x^3+(a^2+b^2)x+(a^3-a^2b-ab^2+b^3)=0,
$$
which can be further rewritten as
$$
(x-(a+b))(x^2+(a+b)x-(a-b)^2)=0.
$$
Then we have $x=a+b$ or $x^2+(a+b)x-(a-b)^2=0$.
It can be easily seen that the discriminant of the equation $x^2+(a+b)x-(a-b)^2=0$ is equal to $-(a^2+b^2)$. Then we get (\ref{diff-d=5}) has three solutions in $\gf_{3^n}$ if $\eta(-(a^2+b^2))=1$, and (\ref{diff-d=5}) has exactly one solution in $\gf_{3^n}$ if $\eta(-(a^2+b^2))=-1$.

When $p>5$, let $x=y-\frac{a+b}{2}$, then (\ref{diff-d=5-2}) becomes
\begin{equation*}
y^3+\frac{a^2+b^2}{4}y=y(y^2+\frac{a^2+b^2}{4})=0.
\end{equation*}
One can easily observe that (\ref{diff-d=5}) has three solutions in $\gf_{p^n}$ if $\eta(-(a^2+b^2))=1$, and (\ref{diff-d=5}) has exactly one solution in $\gf_{p^n}$ if $\eta(-(a^2+b^2))=-1$. This completes the proof.
\end{proof}

\begin{theorem}\label{d=7}
Let $F(x)=x^7$ be a power mapping over $\gf_{p^n}$, where $p$ is an odd prime and $p\ne 7$. Let $\eta$ be the quadratic character of $\mathbb{F}_{p^n}$. For $a, b \in \gf_{p^n}$, the second-order zero differential uniformity of $F(x)$ is less than or equal to $5$. In particular, when $p=3$, $F(x)$ is second-order zero differential $3$-uniform. Moreover, for odd $n$,
$$
{\nabla}_{F}(a,b)=
\begin{cases}
   3^n, &  {\rm if}\,\, ab=0;\\
   1, &  {\rm if}\,\,  \eta(\frac{1}{a^2+b^2})=1; \\
   3, &  {\rm if}\,\, \eta(\frac{1}{a^2+b^2})=-1,\\

\end{cases}
$$
and for even $n$,
$$
{\nabla}_{F}(a,b)=
\begin{cases}
   3^n, &  {\rm if}\,\, ab=0;\\
   1, &  {\rm if}\,\,  ab\ne 0,\,\, a\ne b,\,\,  a^2+b^2=0\,\, {\rm or}\,\, ab\ne 0,\,\,  a^2+b^2\ne 0,\,\, \eta(\frac{1}{a^2+b^2})=-1; \\
   3, &  {\rm if}\,\, ab\ne 0,\,\,  a^2+b^2\ne 0,\,\, \eta(\frac{1}{a^2+b^2})=1.\\
\end{cases}
$$
\end{theorem}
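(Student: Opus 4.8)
The plan is to determine, for each pair $a,b\in\gf_{p^n}$, the number $\nabla_F(a,b)$ of solutions $x\in\gf_{p^n}$ of
$(x+a+b)^7-(x+b)^7-(x+a)^7+x^7=0$, in the spirit of the proof of Theorem~\ref{d=5}. If $ab=0$ the left-hand side vanishes identically and $\nabla_F(a,b)=p^n$. So assume $ab\neq 0$. Because $p$ is odd I will replace $x$ by $x-\tfrac{a+b}{2}$ and set $s=\tfrac{a+b}{2}$, $t=\tfrac{a-b}{2}$, so that $s+t=a$, $s-t=b$, which turns the equation into $\bigl[(x+s)^7+(x-s)^7\bigr]-\bigl[(x+t)^7+(x-t)^7\bigr]=0$. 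Expanding by the binomial theorem kills the odd-degree terms, and after cancelling the units $2$ and $7$ (recall $p\neq 2,7$) the equation becomes
$$x\Bigl(3(s^2-t^2)\,x^4+5(s^4-t^4)\,x^2+(s^6-t^6)\Bigr)=0.$$
I will then use $s^2-t^2=ab$, $s^2+t^2=\tfrac{a^2+b^2}{2}$ and $s^2t^2=\tfrac{(a^2-b^2)^2}{16}$ to write $s^4-t^4$ and $s^6-t^6$ in terms of $a,b$, and pass to $z=x^2$, so that the nonzero solutions $x$ are exactly the square roots of the nonzero roots of the quadratic $\widetilde B(z)=3ab\,z^2+\tfrac{5ab(a^2+b^2)}{2}\,z+(s^6-t^6)$, a nonzero root $z_0$ contributing $1+\eta(z_0)\in\{0,2\}$ values of $x$. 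A quick check of its coefficients shows $\widetilde B\not\equiv 0$, so the displayed equation has at most $5$ solutions (the root $x=0$, together with the at most four square roots of the at most two roots of $\widetilde B$), which already gives $\nabla_F\le 5$.

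For the exact spectrum when $p=3$, I will reduce the coefficients modulo $3$. There $3ab\equiv 0$, $5\equiv-1$, $s^4-t^4\equiv-ab(a^2+b^2)$ and $s^6-t^6\equiv a^3b^3$, so the quartic factor collapses to $ab(a^2+b^2)\,x^2+a^3b^3$. If $a^2+b^2=0$ this is the nonzero constant $a^3b^3$, hence $x=0$ is the only solution and $\nabla_F(a,b)=1$; this possibility occurs exactly when $-1$ is a square in $\gf_{3^n}$, i.e. when $n$ is even, and then $a\neq b$ holds automatically. If $a^2+b^2\neq 0$ the equation becomes $x^2=\dfrac{-a^2b^2}{a^2+b^2}$, which is solvable — and then has exactly two nonzero solutions — precisely when $\eta\bigl(\tfrac{-1}{a^2+b^2}\bigr)=\eta(-1)\,\eta\bigl(\tfrac{1}{a^2+b^2}\bigr)=1$. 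Since $\eta(-1)=(-1)^{(3^n-1)/2}$ equals $-1$ for odd $n$ and $+1$ for even $n$, this solvability condition is $\eta\bigl(\tfrac{1}{a^2+b^2}\bigr)=-1$ when $n$ is odd and $\eta\bigl(\tfrac{1}{a^2+b^2}\bigr)=1$ when $n$ is even, giving $\nabla_F(a,b)=3$ in those subcases and $\nabla_F(a,b)=1$ otherwise. Collecting the cases (and noting that for odd $n$ one always has $a^2+b^2\neq 0$) reproduces the two displayed formulas.

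It remains to check that the value $3$ is actually attained for $p=3$, so that $\nabla_F=3$ and not merely $\nabla_F\le 3$. For this I will invoke the standard fact that the quadratic form $x^2+y^2$ over a finite field of odd order represents every nonzero element; choosing the represented value to be a nonzero non-square (for odd $n$) or a nonzero square (for even $n$) — and noting that such a representation automatically has $x,y\neq 0$ when the value is a non-square, and can be taken with $x,y\neq 0$ otherwise — produces a pair $(a,b)$ with $\nabla_F(a,b)=3$. The main work, and the only place where care is needed, is the $p=3$ reduction: computing $s^4-t^4$ and $s^6-t^6$ modulo $3$ correctly and then tracking the sign $\eta(-1)$ across the two parities of $n$. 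The general bound $\nabla_F\le 5$ comes essentially for free once the factored form above is in hand.
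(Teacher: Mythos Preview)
Your proof is correct and reaches the same conclusions as the paper, but by a cleaner route. The paper expands $(x+a+b)^7-(x+b)^7-(x+a)^7+x^7$ directly, obtains a quintic in $x$ (hence $\nabla_F\le 5$), and for $p=3$ reduces it to the cubic $(a^2+b^2)x^3+a^2b^2x-(a+b)(a-b)^4$, which it then factors \emph{ad hoc} as $(x-(a+b))\bigl((a^2+b^2)x^2+(a+b)(a^2+b^2)x+(a-b)^4\bigr)$ and finishes by computing the discriminant $-\tfrac{a^2b^2}{a^2+b^2}$. Your symmetrizing shift $x\mapsto x-\tfrac{a+b}{2}$ accomplishes the same thing more transparently: it forces the factor $x$ to appear (which is exactly the paper's root $x=a+b$, since $-\tfrac{a+b}{2}=a+b$ in $\gf_3$) and leaves a biquadratic, so no guess of a linear factor is needed. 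The resulting condition $x^2=-\tfrac{a^2b^2}{a^2+b^2}$ is the same discriminant condition the paper finds, and your tracking of $\eta(-1)$ across the parity of $n$ matches their case split. You are also a bit more careful than the paper in explicitly arguing that the value $3$ is attained (the paper simply asserts the formula and the uniformity); your appeal to the universality of $x^2+y^2$ over finite fields is the right way to close that small gap.
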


\begin{proof}
To prove this theorem, it suffices to consider the number of solutions of
\begin{equation}\label{diff-d=7}
(x+a+b)^7-(x+b)^7-(x+a)^7+x^7=0,
\end{equation}
where $a,b\in \gf_{3^n}$.

If $ab=0$, then ${\nabla}_{F}(a,b)=3^n$. For $a, b \in \gf _{3^n}^*$, expanding each of the terms of (\ref{diff-d=7}) gives
\begin{equation}\label{diff-d=7-1}
\begin{aligned}
&7[6abx^5+15(a^2b+ab^2)x^4+10(2a^3b+3a^2b^2+2ab^3)x^3+\\
&15(a^4b+2a^3b^2+2a^2b^3+ab^4)x^2+(6a^5b+15a^4b^2+20a^3b^3+15a^2b^4+6ab^5)x+\\
&(a^6b+3a^5b^2+5a^4b^3+5a^3b^4+3a^2b^5+ab^6)]=0.
\end{aligned}
\end{equation}
When $p=3$, (\ref{diff-d=7-1}) can be reduced to
$$
(a^3b+ab^3)x^3+a^3b^3x-ab(a+b)(a-b)^4=0.
$$
Since $ab\ne 0$, then the above equation becomes
$$
(a^2+b^2)x^3+a^2b^2x-(a+b)(a-b)^4=0;
$$
which can be further rewritten as
$$
(x-(a+b))((a^2+b^2)x^2+(a+b)(a^2+b^2)x+(a-b)^4)=0,
$$
we have $x=a+b$ or $(a^2+b^2)x^2+(a+b)(a^2+b^2)x+(a-b)^4=0$. For the later quadratic equation, when $n$ is odd, we get $a^2+b^2\ne 0$ since $\eta(-1)=-1$. It can be computed that the discriminant of the quadratic equation is equal to $-\frac{a^2b^2}{a^2+b^2}$, then we have (\ref{diff-d=7}) has three solutions in $\gf_{3^n}$ if $\eta(\frac{1}{a^2+b^2})=-1$, and (\ref{diff-d=7}) has one solution in $\gf_{3^n}$ if $\eta(\frac{1}{a^2+b^2})=1$ when $n$ is odd. When $n$ is even, we have $a^2+b^2$ may be equal to $0$ since $\eta(-1)=1$. Assume that $a^2+b^2=0$, we get $(x-(a+b))(a-b)^4=0$. If $a=b$, then we have $a=b=0$, which contradicts with $a, b\ne 0$. If $a\ne b$, then we get (\ref{diff-d=7}) has exactly one solution, namely, $x=a+b$. When $a^2+b^2\ne 0$, our discussion is the same as when $n$ is odd. This proof is completed.
\end{proof}

When $p>3$ and $p\ne 7$, we can easily seen that the degree of (\ref{diff-d=7-1}) is $5$, thus it at most has five solutions in $\gf_{p^n}$. Therefore, the second-order zero differential uniformity of $F(x) =x^7$ is less than or equal to $5$. Since the involved equation (\ref{diff-d=7-1}) has the degree $5$, it seems difficult to calculate the second-order zero differential spectrum. We leave this as an open problem.

\section{Conclusion}\label{con-remarks}

This paper studied the second-order zero differential spectra of some power functions with low differential uniformity by developing techniques to calculate specific equations over finite fields. It is worth noting that all of these power functions have low second-order zero differential uniformity. In the further work, we will investigate more popular functions with low differential uniformity and determine their second-order zero differential spectra.

\end{document}